\documentclass[a4paper]{article}

\usepackage{layaureo}
\usepackage{natbib}
\usepackage{amsmath}
\usepackage{amssymb}
\usepackage{amsthm}
\usepackage{thmtools}  
\usepackage{siunitx}
\usepackage{hyperref}
\usepackage[capitalise]{cleveref}
\usepackage{bm}
\usepackage{tikz}
\usepackage{pgfplotstable}
\usepackage{pgfplots}
\pgfplotsset{compat=1.18}

\usetikzlibrary{arrows.meta}

\theoremstyle{plain}
\newtheorem{theorem}{Theorem}[section]

\newtheorem{proposition}[theorem]{Proposition}

\theoremstyle{definition}
\newtheorem{definition}[theorem]{Definition}
\newtheorem{example}[theorem]{Example}
\newtheorem{hypothesis}[theorem]{Hypothesis}

\theoremstyle{remark}
\newtheorem{remark}[theorem]{Remark}

\numberwithin{equation}{section}

\newcommand{\quoted}[1]{``#1''} 

\newcommand{\vc}[1]{{\bm #1}} 
\def\eps{\varepsilon}

\DeclareMathOperator*{\EV}{\mathbb{E}}
\DeclareMathOperator*{\esssup}{ess\,sup}

\newcommand{\Aspace}{\mathcal{A}}
\newcommand{\Sspace}{\mathcal{S}}

\newcommand{\R}{\mathbb{R}}
\newcommand{\N}{\mathbb{N}}
\DeclareMathOperator*{\EVp}{\mathbb{E}_\pi}
\newcommand{\Var}{\mathbb{V}\mathrm{ar}}

\title{Time-inhomogeneous volatility aversion for financial applications of reinforcement learning}

\author{Federico Cacciamani\thanks{Cross Asset Systematic Trading, IMI Corporate and Investment Banking, Intesa Sanpaolo.} \and Roberto Daluiso\thanks{Interest Rates and Credit Models, IMI Corporate and Investment Banking, Intesa Sanpaolo. Corresponding author: \texttt{roberto.daluiso@intesasanpaolo.com}.} \and Marco Pinciroli\thanks{XVA Management and KVA Pricing, IMI Corporate and Investment Banking, Intesa Sanpaolo.} \and Michele Trapletti\footnotemark[3] \and Edoardo Vittori\footnotemark[1]}
\date{February 5, 2026}
\providecommand{\keywords}[1]{\small\textbf{Keywords:} #1}

\begin{document}

\maketitle

\begin{abstract}
	In finance, sequential decision problems are often faced, for which reinforcement learning (RL) emerges as a promising tool for optimisation without the need of analytical tractability. However, the objective of classical RL is the expected cumulated reward, while financial applications typically require a trade-off between return and risk. In this work, we focus on settings where one cares about the time split of the total return, ruling out most risk-aware generalisations of RL which optimise a risk measure defined on the latter. We notice that a preference for homogeneous splits, which we found satisfactory for hedging, can be unfit for other problems, and therefore propose a new risk metric which still penalises uncertainty of the single rewards, but allows for an arbitrary planning of their target levels. We study the properties of the resulting objective and the generalisation of learning algorithms to optimise it. Finally, we show numerical results on toy examples.
\end{abstract}

\keywords{risk aversion, optimal execution, grid world}

\section{Introduction}\label{sec:introduction}

In many sequential decision problems, maximisation of expected reward is not the only objective, as risk must be kept under control.
Once one has defined a way to quantify this risk, then by elementary Lagrange multiplier theory, without loss of generality one can maximise a single objective where the risk metric multiplied by a risk aversion coefficient is subtracted from the expected reward.
However, the risk metric should be carefully crafted to both accurately describe the goal and allow efficient optimisation.
The former requirement is in general application-specific.

In particular, classical risk measures defined on the total return do not completely align with the needs of trading applications, as they are insensitive to the time distribution of profits and losses, which financial institutions care about for reasons related to financing, budgeting, etc.
To overcome this issue, \citet{Bisi2020trvo} proposed to measure risk as a variance across both time and population (which they call \quoted{reward volatility}), thus expressing a preference not only for low randomness of the return, but also for a homogeneous split thereof over time steps.

This preference for homogeneity has proven practically appropriate for hedging applications, see e.g.~\citet{vittori2020option, mandelli2023hedging, daluiso2023Acva}; although the extended version of the latter \citep{daluiso2023Bcva} already underlines that when the horizon is random, one must carefully define in which sense the dispersion of rewards should be homogeneous across realisations.
However, in fixed-horizon problems, which are \textit{a fortiori} non stationary, forcing comparable returns at different times may introduce an undesired distortion into the objective function.\footnote{Note that the hedging problems in the above papers are themselves formulated with finite horizons; in those specific applications, good policies are nonetheless expected to produce uniform reward histories, so that volatility-penalised optimisation is appropriate, but in a sense, this is an accident.}
For instance, in so-called optimal execution tasks which consist in the liquidation of a large amount of an asset, when marking-to-market the value of the residual position, the impact of transactions on prices can easily introduce a larger \emph{expected} loss at the beginning, when most of the asset to be liquidated is still in the portfolio: plain reward volatility optimisation would improperly strive to counterbalance this natural pattern, and we believe one should not (see \cref{sec:execution} for details).

In this paper, we design a different family of risk-aware objective functions, which retains a tractable policy gradient to be used in advanced policy search or actor-critic algorithms, and still penalises risk (in the sense of probabilistic uncertainty) of the \emph{single} rewards, but allows for an arbitrary planning of their \emph{individual} target values. The rationale is that a completely deterministic reward path known in advance should not be considered \quoted{risky} in any financially meaningful sense, whatever its time profile.

After a review of relevant literature in \cref{sec:literature}, the body of the paper is organised in two main sections.
In \cref{sec:theory} we define the proposed objective functions and study their properties, introducing reinforcement learning based algorithms to optimise them.
In \cref{sec:examples} we observe the behaviour of the optimal policies on three example problems: the first one is an artificial example solvable in closed form which should clarify the difference between our concept of risk aversion and existing alternatives; the second one is a financial example showing the efficacy of our setup on a problem with a fixed deterministic horizon; while the third one is a classical non financial reinforcement learning test case, where some unanticipated effects of stochastic horizons can be appreciated.
\Cref{sec:conclusion} concludes the paper with some final considerations.

\subsection{Related literature}\label{sec:literature}

By the present day, there is a large literature on risk aversion in the machine learning subfield dedicated to sequential decision problems, known as Reinforcement Learning (RL): we refer to \cite{garcia15a} for a comprehensive review. For our purpose, we only attempt a broad conceptual categorisation of the optimisation criteria and list only a few notable papers for each category, just as examples and starting points for the interested reader.

The simplest approach is to apply a non-linear transformation to each realised reward \citep[see e.g.][]{shen2014risktrading}; Conceptually, such a path-by-path transformation represents more a subjective weighting of the objective outcomes (e.g., a different relevance given to monetary profits and losses), than a measure of \quoted{risk} in the proper financial meaning of the term, namely a preference ordering on the probability distribution of rewards.

A more principled approach is to apply a non-linear risk measure to the random variable representing the sum of discounted rewards \citep[see e.g.][]{moody2001learning, Tamar_variance, morimura2010nonparametric, tamar_variance_2013, prashanth_actor-critic_2014, tamar2015policy, tamar_sequential_2017, chow2017risk}. In such setting, unless one finds a clever algorithm crafted for a very specific choice of risk measure, generally speaking the problem becomes very unsuitable for genuine RL approaches, as one is forced to attribute the full result to the last timestep of the episode, with no intermediate rewards. For instance, \cite{neagu2025deep} tries different state-of-the-art algorithms for hedging in this formulation, and empirically finds satisfactory results only with direct policy search (in a way, more a supervised than a reinforcement learning algorithm). However, gradient descent on policy parameters is only possible if the full environment is a differentiable function of actions, which is often not the case in other applications, or daunting to implement even when theoretically true.

Approaches where the time split of rewards matters are less common: besides the already discussed \citet{Bisi2020trvo}, we mention the proposals based on dynamic risk measures \citep{coache2023reinforcement, marzban2023deep}, which are designed to have time-consistent optimal policies, i.e.~optimal also for the conditional risk measure as seen from any future state which can be reached. This means that these approaches take care of the risk of \emph{co-terminal} problems, which is complementary to the view of the present paper where our focus on planning/budgeting could in some sense be understood more in terms of \emph{co-starting} risk. This also explains why we are not particularly interested in consistency properties, as obviously in such settings the starting time is righteously special as the occasion when agents lay down their plans. See also \cref{sec:toy} for a comparison on a toy example.

To conclude on risk-aware generalisations of RL beyond the scope of finance, it is worth mentioning the growing literature in robust reinforcement learning: see \citet{moos2022robust} for a review. Here \quoted{risk} takes yet another meaning related to prudent optimisation under a \emph{set} of candidate probability measures, which in financial jargon would map to the concepts of \quoted{model risk} and \quoted{Knightian uncertainty}, while here we hope to manage \emph{market} risk.

Turning to finance-specific contributions, one of the richest streams is related to the problem of dynamic hedging of a derivative portfolio. Within such field often named \quoted{deep hedging}, some more works looking into stepwise rewards represented by the period Profit and Loss (PnL) can be found, as doing so is clearly more aligned with real-life practice. Such papers can usually introduce risk in interesting but \emph{ad hoc} ways given their scoping to this specific application. In particular, \citet{kolm2019dynamic, du2020deep} justify a reward which is a linear-quadratic transformation of the PnL by heuristically first proxying the mean-variance with the sum of the mean-variances of the rewards, and then proxying the variance of the reward by its expected square, guessing that the square of the mean period PnL should be small. \citet{halperin2019qlbs,halperin2020qlbs} instead penalises the conditional co-terminal variance of the value of the hedging portfolio, arguing that the value of the derivative should be defined by the hedging process itself. Finally, among the deep hedging papers introducing more traditionally a risk measure on the final return, we just mention \citet{mueller2024fast} because they also feel the need to cleanup the penalised variance from a component of reward which is not a direct measure of residual market risk, namely transaction costs; however, this modification is very specific to the application, and would not coincide with our definition even on a single-step problem, as costs introduce genuine variability also at each fixed time due to the randomness of the rebalancing amounts, whose risk is still penalised in our objective functions.

Finally, as already anticipated, one of the key ingredients in our definitions is the tuning of deterministic targets to the single rewards, and a similar idea of setting a unique target for the full return can be recognised in previous works, including at least the seminal \citet{buehler2019deep} and the recent \citet{han2025risksensitive}.

\section{Theory}\label{sec:theory}

\subsection{Definitions and properties}\label{sec:definitions}

This section presents our proposed risk-aware reinforcement learning formulation, by first proposing a modification of the mean-volatility criterion, and then recognizing that it can be seen as an instance of a larger family of objective functions.

We work in the context of an inhomogeneous Markov Decision Process and intentionally generalise the homogeneous notation of \cite{daluiso2023Bcva}, as the definitions and proofs will be similar. In particular, we denote by $(s_i)_{i \geq 0}$ the sequence of states in the state space $\Sspace$; by $(a_i)_{i \geq 0}$ the sequence of actions in the action space $\Aspace$; by $\mu$ the initial state distribution; by $\eps \in (0,\infty]$ the number of steps; by $\gamma$ the discount factor, and by $\Gamma$ their cumulated value $\sum_{i=1}^{\eps} \gamma^{i-1}$; by $(r_i)_{i > 0}$ the sequence of real-valued rewards, and by $\mathcal{G}$ the return $\sum_{i=1}^{\eps} \gamma^{i-1} r_i$. We assume without loss of generality that the Markov chain $s$ is extended to time steps beyond $\eps$ in a way that $\eps < i$ is a function of $s_i$, e.g.~by introduction of an absorbing post-episode-end state.

On the other hand, to account for time inhomogeneity, the law of $s_{i+1}$ given the state $s_i$ and the action $a_i$ at the $i$-th time step will be indexed by $i$, using the notation $\mathcal{P}_i(\cdot|s_i,a_i)$. Similarly, the model for $r_{i+1}$ will be described by a conditional law $\mathcal{R}_i(\cdot|s_i,a_i,s_{i+1})$, and the randomised policy $\pi$ by the conditional laws $\pi_i(\cdot|s_i)$ of $a_i$ given the state $s_i$ at the $i$-th step. 

We now introduce some new definitions. The expected value of the $i$-th reward is
\begin{equation}
	\bar{r}_{\pi,i} := \EVp [ r_i | \eps \geq i ],
\end{equation}
and we define a risk metric which penalises deviation from this value:

\begin{definition}[Inhomogeneous reward volatility]\label{def:inhomogeneous_vol}
	The inhomogeneous reward volatility or inhomogeneous reward variance is:
	\begin{align}
		\varsigma^2_\pi
		 & = \EVp \left[\sum_{i=1}^{\eps} \gamma^{i-1} \left(r_i-\bar{r}_{\pi,i}\right)^2\right]
		\label{eq:inhomogeneous_variance}
	\end{align}
\end{definition}

\begin{definition}[Inhomogeneous mean-volatility]\label{def:inhomogeneous_meanvol} Given a risk aversion coefficient ${\beta\geq 0}$, the ranking criterion for policies is the inhomogeneous mean-volatility functional
	\begin{equation}\label{eq:inhomogeneous_meanvar}
		\pi \rightarrow \EVp\left[\mathcal{G}\right] - \beta \varsigma^2_\pi.
	\end{equation}
\end{definition}

\begin{remark}[Comparison to homogeneous definitions]
	The above differ from unnormalised reward volatility and mean-volatility as defined in \citet{daluiso2023Bcva}
	\[
		\hat{\nu}^2_\pi = \EVp \left[\sum_{i=1}^{\eps} \gamma^{i-1} \left(r_i-J_{\pi}\right)^2\right] \text{ and }\EVp\left[\mathcal{G}\right] - \beta \hat{\nu}^2_\pi, \quad \text{ where } J_{\pi} := \EVp\left[\Gamma^{-1} \mathcal{G}\right]
	\]
	precisely in that the latter subtract a common average value $J_\pi$ from each reward $r_i$, while in \eqref{eq:inhomogeneous_variance} we have a different central value $\bar{r}_{\pi,i}$ for each time step $i$.
\end{remark}

\begin{remark}[Irreducibility to classical RL]
	The simpler way to introduce risk aversion in a sequential decision problem is to apply a nonlinear transformation to the reward profile and then use a classical RL formulation. Sometimes this is even conceptually equivalent to more complex formulations, though maybe less efficient algorithmically. For instance, a Pareto-optimal mean-variance policy maximises variance among all policies with given expected return, and hence also maximises $\EVp[\mathcal{G}^2]$ for fixed $\EVp[\mathcal{G}]$: therefore, to determine the Pareto frontier, instead of using an \textit{ad hoc} algorithm to optimise $\EVp[\mathcal{G}] - \beta\,\Var_\pi[\mathcal{G}]$, one may in principle optimise classically the expected value of a modified terminal payoff $\mathcal{G} - \beta' \mathcal{G}^2$, where we introduced a new symbol $\beta'$ to clarify that the solution to the original problem for fixed $\beta$ will be found for a different and \textit{a priori} unknown value $\beta'$ of the Lagrange multiplier. Analogously, with homogeneous mean-volatility in fixed horizon problems where $\eps$ is almost surely constant, the risk criterion is
	\[
		\hat{\nu}_\pi^2 = \sum_{i=1}^{\eps} \gamma^{i-1} \EVp \left[ r_i^2 \right] +  \sum_{i=1}^{\eps} \gamma^{i-1} J_\pi^2 - 2 \sum_{i=1}^{\eps} \gamma^{i-1} \EVp \left[ r_i \right] J_\pi = \EVp \left[\sum_{i=1}^{\eps} \gamma^{i-1} r_i^2 \right] - \Gamma J_\pi^2,
	\]
	where $\Gamma J_\pi = \EVp[\mathcal{G}]$ is exactly the return criterion: therefore, Pareto-optimal homogeneous mean-volatility policies should be spanned by classical optimisers of the transformed reward $r_i - \beta' r_i^2$.

	This is not the case for the new objective \eqref{eq:inhomogeneous_meanvar}. Indeed, even for constant $\eps$, while one can write the risk as $\varsigma^2_\pi = \sum_{i=1}^{\eps} \gamma^{i-1} \Var_\pi \left[ r_i \right]$,
	expanding the variances one gets
	\[
		\EVp \left[ \sum_{i=1}^{\eps} \gamma^{i-1}  r_i^2 \right] - \sum_{i=1}^{\eps} \gamma^{i-1} \bar{r}_{i,\pi}^2
	\]
	and this time the correction term $\sum_{i=1}^{\eps} \gamma^{i-1} \bar{r}_{i,\pi}^2$ to the plain return of the squared reward is not a function of the return criterion $\EVp[\mathcal{G}]$.
\end{remark}

Quantitatively, one can note that the risk definition \eqref{eq:inhomogeneous_variance} is weaker than that of reward volatility $\hat{\nu}^2_\pi$:
\begin{theorem}[Volatility inequality]
	For any policy $\pi$, it holds that
	\begin{equation}
		\hat{\nu}^2_\pi \geq \varsigma^2_\pi.
	\end{equation}
\end{theorem}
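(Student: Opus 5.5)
The plan is a bias--variance decomposition performed at each time step, using the indicator $\mathbf{1}_{\{\eps \geq i\}}$ to handle a possibly random horizon. First I rewrite both risk metrics as sums over all $i \geq 1$ with indicators:
\[
	\hat{\nu}^2_\pi = \sum_{i\geq 1} \gamma^{i-1} \EVp\left[\mathbf{1}_{\{\eps\geq i\}} (r_i - J_\pi)^2\right], \qquad
	\varsigma^2_\pi = \sum_{i\geq 1} \gamma^{i-1} \EVp\left[\mathbf{1}_{\{\eps\geq i\}} (r_i - \bar{r}_{\pi,i})^2\right].
\]
Exchanging sum and expectation is legitimate by Tonelli, since all terms are nonnegative. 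The claim then follows if each summand of the first sum dominates the corresponding summand of the second.

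For a fixed $i$ with $P_\pi(\eps \geq i) > 0$, I insert $\bar{r}_{\pi,i}$ and write
\[
	(r_i - J_\pi)^2 = (r_i - \bar{r}_{\pi,i})^2 + 2 (r_i - \bar{r}_{\pi,i})(\bar{r}_{\pi,i} - J_\pi) + (\bar{r}_{\pi,i} - J_\pi)^2 .
\]
I multiply by the indicator and take expectations. The cross term is $2(\bar{r}_{\pi,i}-J_\pi)\,\EVp[\mathbf{1}_{\{\eps\geq i\}}(r_i-\bar{r}_{\pi,i})]$. This vanishes because $\bar{r}_{\pi,i}$ is by definition the conditional mean of $r_i$ on the event $\{\eps\geq i\}$, so $\EVp[\mathbf{1}_{\{\eps\geq i\}} r_i] = \bar{r}_{\pi,i}\, P_\pi(\eps\geq i)$. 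Hence
\[
	\EVp\left[\mathbf{1}_{\{\eps\geq i\}}(r_i-J_\pi)^2\right] = \EVp\left[\mathbf{1}_{\{\eps\geq i\}}(r_i-\bar{r}_{\pi,i})^2\right] + P_\pi(\eps\geq i)\,(\bar{r}_{\pi,i}-J_\pi)^2 .
\]
The last term is nonnegative. This is the statement that the conditional mean minimises the conditional mean-square deviation over constants, applied with the constant $J_\pi$, which is deterministic and hence a legitimate competitor. If $P_\pi(\eps\geq i)=0$, both summands are zero, and the inequality is trivial there.

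Summing the per-step identities with weights $\gamma^{i-1}$ gives the explicit identity
\[
	\hat{\nu}^2_\pi = \varsigma^2_\pi + \sum_{i\geq1}\gamma^{i-1}P_\pi(\eps\geq i)(\bar{r}_{\pi,i}-J_\pi)^2 \;\geq\; \varsigma^2_\pi .
\]
The only real obstacle is integrability. If $r_i$ fails to be square integrable on $\{\eps\geq i\}$, the cross-term argument is not directly justified. If $\hat{\nu}^2_\pi = \infty$, however, there is nothing to prove. Otherwise each $\mathbf{1}_{\{\eps\geq i\}} r_i$ is square integrable, since $J_\pi$ is finite, implicitly assumed for $\hat{\nu}^2_\pi$ to be defined. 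In that case all manipulations above are valid term by term, and the nonnegativity of every summand lets me pass to the infinite sum without any convergence issue.
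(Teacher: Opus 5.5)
Your proof is correct and follows essentially the same route as the paper: compare the two metrics addend by addend and use that the conditional mean $\bar{r}_{\pi,i}$ minimises the mean-square deviation on $\{\eps\geq i\}$ among constants, here $J_\pi$. Your explicit bias--variance identity, which gives the gap as $\sum_{i}\gamma^{i-1}P_\pi(\eps\geq i)(\bar{r}_{\pi,i}-J_\pi)^2$, and your integrability remarks are just a more detailed rendering of that argument.
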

\begin{proof}
	The $i$-th addend in $\hat{\nu}^2_\pi$ is
	\begin{equation}\label{eq:1}
		\EVp \left[\gamma^{i-1}(r_i - J_\pi)^2I_{\eps\geq i}\right] = \gamma^{i-1}\EVp \left[ \EVp\left[(r_i - J_\pi)^2 | \eps \geq i \right] I_{\eps\geq i}\right];
	\end{equation}
	by applying to the law $\EVp\left[\cdot | \eps \geq i \right]$ the elementary property that the constant $x$ which minimises the expectation of the quadratic error $(r_i - x)^2$ is the mean, we get that
	\[
		\EVp\left[(r_i - J_\pi)^2 | \eps \geq i \right] \geq \EVp\left[(r_i - \EVp\left[r_i | \eps \geq i \right])^2 | \eps \geq i \right] = \EVp\left[(r_i - \bar{r}_{\pi,i})^2 | \eps \geq i \right].
	\]
	Substituting back in \eqref{eq:1}, we get that the generic addend of $\hat{\nu}^2_\pi$ is larger than that of $\varsigma^2_\pi$, concluding the proof.
\end{proof}

The comparison with the standard return variance is less neat, as no inequality holds in general if $\eps$ is stochastic. Indeed, a sequence of deterministic rewards stopped at a random episode end step has zero inhomogeneous reward volatility but positive return variance. On the other hand, if the horizon is deterministic as in most problems, then the new risk measure is stricter than the return variance $\sigma_\pi^2 = \Var_\pi \left[ \mathcal{G} \right]$:

\begin{theorem}[Variance inequality]
	For any policy $\pi$, if $\eps$ is deterministic (finite or infinite), it holds that
	\begin{equation}
		\sigma^2_\pi \leq \Gamma\varsigma^2_\pi.
	\end{equation}
\end{theorem}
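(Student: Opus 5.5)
With $\eps$ deterministic, $\bar{r}_{\pi,i} = \EVp[r_i]$ because the conditioning event $\{\eps \geq i\}$ is either sure or empty. The observation in the irreducibility remark then gives $\varsigma^2_\pi = \sum_{i=1}^{\eps} \gamma^{i-1} \Var_\pi[r_i]$. On the other side, the return is centred as
\[
\mathcal{G} - \EVp[\mathcal{G}] = \sum_{i=1}^{\eps} \gamma^{i-1}(r_i - \bar{r}_{\pi,i}).
\]
So the claim reduces to bounding the second moment of a weighted sum of centred random variables by the weighted sum of their second moments, times the total weight $\Gamma$.

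The key step is a weighted Cauchy--Schwarz (equivalently Jensen) inequality applied pathwise. Writing $\gamma^{i-1}(r_i-\bar r_{\pi,i}) = \gamma^{(i-1)/2}\cdot \gamma^{(i-1)/2}(r_i-\bar r_{\pi,i})$, we get
\[
\Bigl(\sum_{i=1}^{\eps} \gamma^{i-1}(r_i - \bar{r}_{\pi,i})\Bigr)^2 \leq \Bigl(\sum_{i=1}^{\eps}\gamma^{i-1}\Bigr)\Bigl(\sum_{i=1}^{\eps}\gamma^{i-1}(r_i - \bar{r}_{\pi,i})^2\Bigr) = \Gamma \sum_{i=1}^{\eps}\gamma^{i-1}(r_i - \bar{r}_{\pi,i})^2
\]
for every realisation. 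Taking $\EVp$ of both sides, the left side becomes $\sigma^2_\pi$ and the right side becomes $\Gamma\varsigma^2_\pi$ by \cref{def:inhomogeneous_vol}, which concludes.

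The main obstacle is the infinite-horizon case $\eps=\infty$. There I would first dispose of the trivial cases: if $\Gamma=\infty$ (i.e.\ $\gamma \geq 1$) with $\varsigma^2_\pi>0$, or if $\varsigma^2_\pi=\infty$, the right side is $+\infty$ and there is nothing to prove. If $\varsigma^2_\pi = 0$, every addend vanishes almost surely, so $\mathcal{G}$ equals its mean and $\sigma^2_\pi = 0$; this also covers the product $\infty\cdot 0$. Otherwise $\gamma<1$ and $\varsigma^2_\pi<\infty$. In that case I apply the pathwise inequality to the truncated sums up to $n$ and pass to the limit: the weighted square-summable sequence makes the series $\sum_i \gamma^{i-1}(r_i-\bar r_{\pi,i})$ converge absolutely almost surely (again by Cauchy--Schwarz), and then Fatou's lemma, or monotone convergence on the right side, yields the inequality. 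I am implicitly assuming that $\EVp[\mathcal{G}]$ is well defined as $\sum_i\gamma^{i-1}\bar r_{\pi,i}$, which holds under the same integrability conditions.
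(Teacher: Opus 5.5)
Your proof is correct and follows the paper's argument. Your pathwise weighted Cauchy--Schwarz step is exactly the paper's discrete Jensen inequality with weights $\Gamma^{-1}\gamma^{i-1}$, followed by taking expectations. Your separate treatment of $\eps=\infty$ (the cases $\Gamma=\infty$ and $\varsigma^2_\pi\in\{0,\infty\}$, plus truncation and passage to the limit) correctly fills in details the paper leaves implicit, and the Cauchy--Schwarz form has the small advantage of not needing $\Gamma<\infty$ to normalise the weights.
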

\begin{proof}
	For deterministic $\eps$, the expected return appearing in the definition of variance is equal to the sum of $\gamma^{i-1}\bar{r}_{\pi,i}$. Then one can apply to the inner summation in the definition of $\sigma_\pi^2$ a discrete Jensen inequality:
	\begin{align*}
		\sigma_\pi^2 & = \EVp_{s_0 \sim \mu}\left[\Gamma^2\left(\sum_{i=1}^{\eps}\Gamma^{-1}\gamma^{i-1}(r_i - \bar{r}_{\pi,i})\right)^2 \right]                            \\
		             & \leq \EVp_{s_0 \sim \mu}\left[\Gamma^2\sum_{i=1}^{\eps}\Gamma^{-1}\gamma^{i-1}\left(r_i - \bar{r}_{\pi,i}\right)^2 \right] = \Gamma \varsigma^2_\pi.
	\end{align*}
\end{proof}

The inhomogeneous mean-volatility criterion in \eqref{eq:inhomogeneous_meanvar} solves the time-attribution issues of mean-variance and homogeneous mean-volatility, but it retains other drawbacks of using a variance as a risk criterion, which essentially derive from penalizing in the same way positive and negative deviations from the expected reward $\bar{r}_{\pi,i}$. In contexts where this is not acceptable, one can use the following generalisation.

\begin{definition}[Inhomogeneous $\ell$-volatility]\label{def:inhomogeneous_ellvol}
	For any function $\ell: \R \to \R^+$ such that $\ell(0) = 0$ and any policy $\pi$, the inhomogeneous $\ell$-volatility or inhomogeneous $\ell$-variance is:
	\begin{equation}
		\varsigma^\ell_\pi
		= \inf_{\bar{r}^{\ell} \in \R^{\N}} \EVp \left[\sum_{i=1}^{\eps} \gamma^{i-1} \ell\left(r_i-\bar{r}^{\ell}_i\right)\right]
		\label{eq:inhomogeneous_ell_variance}
	\end{equation}
	and the inhomogeneous mean-$\ell$-volatility is:
	\begin{equation}
		\EVp[\mathcal{G}] - \varsigma^\ell_\pi.\label{eq:inhomogeneous_ell_meanvar}
	\end{equation}
\end{definition}

One immediately sees that \eqref{eq:inhomogeneous_ell_variance}-\eqref{eq:inhomogeneous_ell_meanvar} reduces to \eqref{eq:inhomogeneous_variance}-\eqref{eq:inhomogeneous_meanvar} for $\ell = \beta(\cdot)^2$, since in such case the optimal target reward $\bar{r}^{\ell}_{\pi,i}$ is the conditional mean $\bar{r}_{\pi,i}$; while for instance $\ell(x) = \beta| x |$ gives a risk penalty equal to a risk aversion coefficient $\beta$ times the absolute distance of the reward from the conditional median of the reward distribution. On the other hand, the generalised mean-$\ell$-volatility can be rewritten as a superposition of conditional performance metrics:
\begin{align}
	\EVp [\mathcal{G}] - \varsigma^\ell_\pi & = \sum_{i=1}^{\infty} \gamma^{i-1} \mathbb{P}(\eps \geq i) \sup_{\bar{r}^{\ell}_i \in \R}\EVp \left[r_i - \ell\left(r_i-\bar{r}^{\ell}_i\right) \bigg| \eps \geq i \right] \nonumber                          \\
	                                        & = \sum_{i=1}^{\infty} \gamma^{i-1} \mathbb{P}(\eps \geq i) \sup_{\bar{r}^{\ell}_i \in \R}\EVp \left[\bar{r}^{\ell}_i + u\left(r_i-\bar{r}^{\ell}_i\right) \bigg| \eps \geq i \right] \label{eq:superposition}
\end{align}
where each of suprema can be interpreted as the cash-invariant hull \citep[as defined in][]{filipovic2007monotone} of an expected utility computed under a conditional law, offering a link to the literature for good choices of $u$ and hence of $\ell(x) = x - u(x)$, as in the below examples.

\begin{example}[Inhomogeneous optimised certainty equivalent]\label{ex:oce}
	For an increasing concave $u$ with $u(0) = 0$ and $u'(0) = 1$, each summand becomes an optimised certainty equivalent (OCE) as introduced by \citet{bental1986oce}; this was already used in finance for a sequential decision problem at least by the influential \citet{buehler2019deep}, but computing the OCE on the total return and therefore with a single scalar target, and then optimizing it simultaneously with the policy by direct gradient descent, which needs differentiability of the environment.
\end{example}

\begin{example}[Inhomogeneous monotone mean-volatility]\label{ex:monotone_meanvar}
	For $u$ equal to the monotone hull of quadratic utility (which also fits in the previous example), i.e.
	\[
		u_{\mathrm{m}}^\beta(x) = \min\left(x, \frac{1}{2\beta}\right) - \beta \min\left(x, \frac{1}{2\beta}\right)^2,
	\]
	by trivial generalisation of what \citet{cerny2020semimartingale} proves with $\beta=1/2$, the cash invariant hull of expected utility which appears in in each term of \eqref{eq:superposition} has a nice interpretation as the monotone hull of mean-variance, i.e.~the minimal correction to mean-variance which ensures that adding a positive random variable does not worsen the measure:
	\begin{equation}\label{eq:monotone}
		\sup_{\bar{r}_i \in \R} \EVp \left[\eta + u_{\mathrm{m}}^\beta\left(X-\eta\right) | \eps \geq i \right] = \sup_{Y \in L_0^+} \left\{ \EVp[X - Y | \eps \geq i] - \beta \Var_\pi[X - Y | \eps \geq i] \right\}.
	\end{equation}
	Note that in our parametrisation, the utility function $u_{\mathrm{m}}^\beta$ corresponds to a loss function
	\[
		\ell_{\mathrm{m}}^\beta(x) = x  - u_{\mathrm{m}}^\beta(x) = x - \min\left(x, \frac{1}{2\beta}\right) + \beta \min\left(x, \frac{1}{2\beta}\right)^2 = \begin{cases}\beta x^2& x \leq (2\beta)^{-1}\\x - (4\beta)^{-1}  & x > (2\beta)^{-1} \end{cases}
	\]
	that switches in a differentiable way from the standard quadratic penalty to a less severe affine penalty when the windfall $x$ over the target reward exceeds a threshold $(2\beta)^{-1}$.
\end{example}

For numerical purposes, the following assumption can be useful.
\begin{hypothesis}\label{hp:diff}
	$\ell$ is a differentiable function with derivative $\ell'$, and the suprema in \eqref{eq:superposition} are attained by a unique maximiser $\bar{r}^\ell_{\pi,i}$ for all $i \leq \esssup \eps$, characterised by the first order condition
	\begin{equation}\label{eq:stationarity}
		\EVp [\ell'(r_i - \bar{r}^\ell_{\pi,i}) | \eps \geq i ] = 0
	\end{equation}
\end{hypothesis}

This holds true for most interesting examples, as shown by the following.
\begin{proposition}
	Suppose that for all $i \leq \esssup \eps$, there exists some $\bar{r}_i^\ell$ such that $\EVp [\ell(r_i - \bar{r}_i^\ell) | \eps \geq i]$ is finite. Then \cref{hp:diff} holds for:
	\begin{enumerate}
		\item The inhomogeneous mean-volatility defined in \eqref{eq:inhomogeneous_meanvar};
		\item The inhomogeneous OCE defined in \cref{ex:oce} if $u \in C^1(\R)$ is strictly concave;
		\item The inhomogeneous monotone mean-volatility defined in \cref{ex:monotone_meanvar}.
	\end{enumerate}
\end{proposition}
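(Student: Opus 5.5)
Fix $i \leq \esssup \eps$ and write $X = r_i$ under the conditional law $Q = \mathbb{P}_\pi(\cdot \mid \eps \geq i)$. By \eqref{eq:superposition}, the supremum to study is that of
\[
f(\eta) = \EVp\left[\eta + u(X-\eta) \mid \eps \geq i\right] = \EVp[X \mid \eps\geq i] - \EVp\left[\ell(X-\eta)\mid \eps\geq i\right].
\]
Equivalently, I will minimise $g(\eta) = \EVp[\ell(X-\eta)\mid \eps\geq i]$, assuming $\EVp[X\mid\eps\geq i]$ is finite where needed. The common strategy for the three cases is:
\begin{enumerate}
	\item show that $g$ is finite and convex on all of $\R$;
	\item show that $g$ is coercive, so that a minimiser exists;
	\item show that $g$ is strictly convex near any minimiser, so that the minimiser is unique;
	\item differentiate under the expectation, so that $g'(\eta) = -\EVp[\ell'(X-\eta)\mid\eps\geq i]$ and the minimiser is characterised by $g'=0$, which is \eqref{eq:stationarity}.
\end{enumerate}
In each case $\ell$ is convex and $C^1$: $\ell = \beta x^2$; $\ell = x - u$ with $u$ concave and $C^1$; and $\ell^\beta_{\mathrm m}$, which is $C^1$ at the junction with slope $1$ there. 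Hence $g$ is convex as a function of $\eta$. Differentiation under the integral follows from monotone convergence applied to the convex difference quotients, which are monotone in the step, together with finiteness of $g$ at neighbouring points. This reduces the problem to finiteness, coercivity and strict convexity.

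\emph{Case 1}, $\ell = \beta x^2$ with $\beta>0$. The hypothesis gives $X \in L^2(Q)$, so $g$ is an explicit quadratic in $\eta$ and everything is elementary: the unique minimiser is the conditional mean. If $\beta=0$ the statement degenerates, and I would exclude it or treat it separately.

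\emph{Case 3}, $\ell^\beta_{\mathrm m}$. Since $\ell^\beta_{\mathrm m}(x) \le \beta x^2$ and $\ell^\beta_{\mathrm m}$ grows quadratically as $x \to -\infty$, finiteness of $g$ at one point forces $X^- \in L^2$ and $X^+ \in L^1$, and then $g$ is finite everywhere.
\begin{itemize}
	\item Coercivity: as $\eta\to+\infty$ the quadratic branch dominates, so $g \to \infty$. As $\eta\to-\infty$ we have $\ell(X-\eta) \approx X-\eta - (4\beta)^{-1} \to \infty$.
	\item Uniqueness: $\ell$ is strictly convex on $(-\infty,(2\beta)^{-1}]$, so $g$ is strictly convex in $\eta$ provided $Q(X-\eta < (2\beta)^{-1})>0$. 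This holds for every $\eta$ below $\esssup X - (2\beta)^{-1}$. If instead $X - \eta \ge (2\beta)^{-1}$ a.s., then $g'(\eta) = -1 \neq 0$. So every minimiser lies in the strict-convexity region, and it is unique.
\end{itemize}

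\emph{Case 2}, strictly concave $C^1$ $u$ with $u(0)=0$, $u'(0)=1$. Here $\ell = x - u(x)$ is strictly convex, so uniqueness is immediate once a minimiser exists. Existence is the main obstacle, namely coercivity and finiteness of $g$ away from the given point. Strict concavity gives $u'(x) > 1$ for $x<0$ and $u'(x)<1$ for $x>0$. Hence $\ell' = 1-u'$ is negative on $(-\infty,0)$ and positive on $(0,\infty)$, and $\ell$ grows at least linearly in both directions with slopes bounded away from zero, e.g.
\[
\ell(x) \ge (1-u'(1))(x-1) \text{ for } x \ge 1.
\]

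For coercivity as $\eta \to -\infty$ or $+\infty$, I would take a set of positive $Q$-measure on which $X$ is bounded, use this linear growth there, and use $\ell\ge0$ elsewhere, so $g\to\infty$.

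Finiteness at every $\eta$ is the delicate point. When moving $\eta$ in the direction that pushes $X-\eta$ into the left tail, $\ell$ may grow faster than linearly (e.g.\ exponential utility), so $g$ could be $+\infty$ there. I would handle this by working with the convex function $g$ on its effective domain, which is an interval containing the given point. The minimiser exists in this interval by coercivity and lower semicontinuity (Fatou). If the minimiser sat at a boundary point of the interval, the one-sided derivative there would be $-\infty$ or would have the wrong sign, so the minimiser is interior. The first order condition \eqref{eq:stationarity} then holds because $g$ is differentiable in the interior. I expect the verification that the minimiser is interior (and not at an endpoint of the domain) to need the most care, and I would first try to rule it out by a monotone convergence argument on $\ell'(X-\eta)$ as $\eta$ approaches the endpoint.
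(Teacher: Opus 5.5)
Your cases 1 and 3 are sound and follow essentially the paper's route. The shared ingredients are convexity of $\ell$, differentiation under the expectation via monotone difference quotients, and coercivity. For the monotone loss, both arguments observe that a minimiser cannot have $r_i-\bar r\ge(2\beta)^{-1}$ almost surely: you use $g'=-1$, the paper shifts the target, which is the same idea. You are also right that $\beta=0$ must be excluded for uniqueness, and that in these two cases $g$ is finite on all of $\R$, which is what legitimises the first order condition. One slip: $Q(X-\eta<(2\beta)^{-1})>0$ holds exactly when $\eta>\mathrm{ess\,inf}\,X-(2\beta)^{-1}$, not for $\eta$ below $\esssup X-(2\beta)^{-1}$. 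The dichotomy with the complementary case is what carries the argument, so nothing breaks.

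The genuine gap is in case 2, at the claim that a minimiser at an endpoint of the effective domain would give a one-sided derivative equal to $-\infty$ or of the wrong sign. Finiteness of $g$ always propagates to the left, since $\ell$ is decreasing on $(-\infty,0)$ and at most linear on $(0,\infty)$. So the domain is $(-\infty,\eta^+)$ or $(-\infty,\eta^+]$. In the open case Fatou gives $g\to\infty$ at $\eta^+$, and the minimiser is interior. In the closed case monotone convergence gives $g'_-(\eta^+)=-\EVp[\ell'(X-\eta^+)\mid\eps\ge i]$, which can be finite and negative.

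Here is an example where this happens. Let $\ell(x)=e^{x^2}$ on $(-\infty,-1]$, $\ell(x)=e\,x^2$ on $[-1,0]$, and $\ell(x)=\int_0^x \frac{t}{2+2t}\,dt$ on $[0,\infty)$. Then $u=x-\ell$ is $C^1$, increasing and strictly concave, with $u(0)=0$ and $u'(0)=1$. Let $X$ equal $10$ with probability $1-p$, and otherwise have density $c\,e^{-x^2}|x|^{-3}$ on $(-\infty,-1]$. Then $g$ is finite exactly on $(-\infty,0]$. For every $\eta\le 0$,
$\EVp[\ell'(X-\eta)\mid\eps\ge i]\ge\EVp[\ell'(X)\mid\eps\ge i]=(1-p)\tfrac{5}{11}-2cp>0$ for small $p$, while this expectation equals $-\infty$ for $\eta>0$. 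So the unique maximiser is $\bar r=0$, yet \eqref{eq:stationarity} has no solution at all. Thus \cref{hp:diff} fails even though the proposition's assumption holds.

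No argument can close your step: point 2 needs an extra assumption. One sufficient condition is that $\EVp[\ell(r_i-\bar r)\mid\eps\ge i]<\infty$ for every $\bar r$. This is automatic when $\ell(x-1)\le C(1+\ell(x))$, as for exponential utility. The paper's own proof has the same hole: it proves the derivative formula only inside the domain, then obtains a minimiser \emph{in} the domain without showing it is interior. You located the delicate point correctly; it is your proposed resolution that fails.
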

\begin{proof}
	Point 1 is a special case of point 2 for $\ell(x) = \beta x^2$.

	For both points 2 and 3, we start noting that when $u$ is concave, $\ell$ is convex, so $\mu_i(\bar{r}_i^\ell) := \EVp [\ell(r_i - \bar{r}_i^\ell) | \eps \geq i ]$ also is, and its stationary points are global minimisers. We claim that its derivative is $-\EVp [\ell'(r_i - \bar{r}_i^\ell) | \eps \geq i ]$ at every point inside its domain: indeed, there is an $h$ such that $[\bar{r}_i^\ell - 2h, \bar{r}_i^\ell + 2h]$ is in the domain, which means that for $r \in [\bar{r}_i^\ell - h, \bar{r}_i^\ell + h]$ the derivative is uniformly bounded below and above by any fixed incremental ratios chosen inside $[\bar{r}_i^\ell - 2h, \bar{r}_i^\ell - h]$ and $[\bar{r}_i^\ell + h, \bar{r}_i^\ell + 2h]$ respectively, which are integrable by definition of domain. So stationarity is exactly \eqref{eq:stationarity}.

	We now prove that a global minimum or $\mu_i$ exists. For this, note that $\ell$ has a strict global minimum in zero. Then by convexity, $\ell$ tends to infinity for large negative and positive argument. This easily implies that if $\mu_i(\bar{r}_i^\ell) < \infty$, then $\mu_i$ is larger than $\mu_i(\bar{r}_i^\ell)$ outside a compact interval. By continuity a minimum exists in that compact subset of the domain.

	Finally, we prove uniqueness of the minimum. For point 2, this is trivial by strict convexity of $\mu_i$, which is inherited from $\ell$. For point 3, the existence of two distinct minima $\bar{r}_i, \bar{r}_i'$ would imply that $\mu_i$ is linear in the interval, and therefore that $r_i - \bar{r}_i$ is almost surely inside the interval where $\ell_{\mathrm{m}}^\beta$ is linear: i.e., almost surely greater than $(2\beta)^{-1}$. But then $\bar{r}_i'' = \bar{r}_i - (2\beta)^{-1}$ would give a strictly smaller penalty almost surely, contradicting minimality of $\bar{r}_i$.
\end{proof}

\subsection{Algorithms}\label{sec:algorithms}

One idea to optimise our risk criterion is to note that the optimisation problem can be restated exchanging the order of the suprema:
\begin{equation}\label{eq:nested}
	\sup_\pi \sup_{\bar{r}^\ell\in\R^\N} \EVp \left[\sum_{i=1}^{\eps} \gamma^{i-1}\left[r_i-\ell(r_i - \bar{r}^\ell_i)\right] \right] = \sup_{\bar{r}^\ell\in\R^{\esssup \eps}} \sup_\pi \EVp \left[\sum_{i=1}^{\eps} \gamma^{i-1}\left[r_i-\ell(r_i - \bar{r}^\ell_i)\right] \right],
\end{equation}
where the inner maximum for fixed $\bar{r}^\ell$ is a plain RL problem with a modified reward function. In simple finite horizon cases, its solution is fast enough to be called at each evaluation of an outer (global) optimisation routine over a finite-dimensional $\bar{r}^\ell$; note that if the RL algorithm is iterative, then the policy and/or value functions found for the previous candidate $\bar{r}^\ell$ should be a good starting guess for the next candidate $\bar{r}^\ell$ examined by the optimiser.

Alternatively, if all rewards can be implemented as a differentiable function of the actions, one could try direct stochastic gradient descent for \eqref{eq:nested} jointly over both a parametrisation of the policy (say the weights of a neural network) and $\bar{r}^\ell$, which can be seen as a multidimensional generalisation of the idea of \citet{buehler2019deep}.

Finally, for more complex problems, we can formulate a pathwise estimator of the policy gradient for $\varsigma^\ell_\pi$, allowing a generalisation of the Trust Region Volatility Optimisation (TRVO) algorithm which we will denote by the acronym IVO (Inhomogeneous Volatility Optimisation). To this purpose, assume that \cref{hp:diff} holds, and
define the inhomogeneous version of the action-volatility function:
\begin{equation*}
	X_{\pi,i}(s,a) := \EVp\left[\sum_{j=i+1}^{\eps} \gamma^{j-i-1} \ell(r_j - \bar{r}^\ell_{\pi,j}) \,\bigg|\, s_i=s,a_i=a\right],
\end{equation*}
which is zero if $s$ is a terminal or post-terminal state. Using the symbol ${\EV}_{\pi,i}$ to denote the conditional expectation given the path up to time $i$, we can write for it a Bellman-like equation:
\begin{equation}\label{eq:bellman}
	X_{\pi,i}(s_i,a_i) = I_{\eps > i}  {\EV}_{\pi,i}\left[\left(r_{i+1}-\bar{r}_{\pi,i+1}\right)^2\right] + \gamma {\EV}_{\pi,i} \left[X_{\pi,i+1}(s_{i+1},a_{i+1})\right].
\end{equation}

Then we have the following.
\begin{theorem}\label{thm:gradient}
	Consider policies that are absolutely continuous with respect to a reference measure $\bar{\pi}$, such that the density $\pi_{\theta,i}(\cdot|s)$ at each time step $i$ depends differentiably on real-valued parameters~$\theta$. Assume that $\eps$ is almost surely finite and that \cref{hp:diff} holds. Then under technical hypotheses (see \cref{rem:conditions}), the gradient of the inhomogeneous reward volatility satisfies
	\begin{equation}\label{eq:gradient}
		\nabla_\theta\varsigma_\pi^2 = \EVp \left[ \sum_{i=0}^{\eps-1} \gamma^{i}X_{\pi,i}(s_i,a_i)\nabla_\theta\log \pi_{\theta,i}(a_i|s_i) \right].
	\end{equation}
\end{theorem}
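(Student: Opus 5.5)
The plan is to write the risk in its variational form $\varsigma^\ell_\pi = \EVp[\sum_{i=1}^{\eps}\gamma^{i-1}\ell(r_i-\bar r^\ell_{\pi,i})]$, where the targets $\bar r^\ell_{\pi,i}$ depend on $\theta$ through the policy. I will read $\varsigma^2_\pi$ in \eqref{eq:gradient} as this quantity, with the square in \eqref{eq:bellman} read as $\ell$; the case $\ell=(\cdot)^2$ is the one literally stated. The key observation is an envelope argument. By \eqref{eq:superposition} and \cref{hp:diff}, $\varsigma^\ell_\pi = \inf_{\bar r}F(\theta,\bar r)$ with
\[
F(\theta,\bar r)=\EVp\Big[\sum_{i=1}^{\eps}\gamma^{i-1}\ell(r_i-\bar r_i)\Big],
\]
and the infimum is attained at the unique stationary point $\bar r^\ell_{\pi,i}$. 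Hence
\[
\nabla_\theta\varsigma^\ell_\pi=\nabla_\theta F(\theta,\bar r)\big|_{\bar r=\bar r^\ell_\pi}.
\]
I would justify this by the chain rule. The partial derivative of $F$ in $\bar r_i$ equals $-\gamma^{i-1}\mathbb{P}(\eps\geq i)\,\EVp[\ell'(r_i-\bar r_i)\mid\eps\geq i]$, which vanishes at $\bar r^\ell_{\pi,i}$ by \eqref{eq:stationarity}. So the implicit dependence of the targets on $\theta$ contributes nothing. The differentiability of $\theta\mapsto\bar r^\ell_{\pi,i}$ needed here is one of the technical hypotheses deferred to \cref{rem:conditions}. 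Alternatively, a direct inf-argument gives the inequality in both directions without needing it: compare $F(\theta+h,\bar r_\theta)$ and $F(\theta,\bar r_{\theta+h})$.

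Once the targets are frozen, $F(\theta,\bar r)$ is just the expected discounted sum of a modified reward $\tilde r_i=\ell(r_i-\bar r_i)$. This is a classical (time-inhomogeneous) RL objective, so I apply the standard likelihood-ratio policy gradient theorem. Write the path density as a product of $\mu$, the kernels $\mathcal P_i$, $\mathcal R_i$ and $\pi_{\theta,i}$, and differentiate under the integral, using that $\eps$ is finite almost surely. This gives
\[
\nabla_\theta F=\EVp\Big[\sum_{j}\gamma^{j-1}\tilde r_j\sum_{i<j}\nabla_\theta\log\pi_{\theta,i}(a_i|s_i)\Big].
\]
Next I use causality, since future score terms have zero conditional mean and so past rewards drop out. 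Then I exchange the sums and condition on $(s_i,a_i)$ via the Markov property. The inner sum $\sum_{j>i}\gamma^{j-1}\tilde r_j$ becomes $\gamma^{i}X_{\pi,i}(s_i,a_i)$ by the definition of the action-volatility function, recovering \eqref{eq:gradient}. The indicator conventions come for free: rewards after $\eps$ are absent, and $X$ vanishes at terminal states, which matches the truncation of the sum at $\eps-1$.

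The main obstacle is the envelope step, which needs care with conditional laws. $F$ weights each target by $\mathbb{P}(\eps\geq i)$, and this probability also depends on $\theta$. The stationarity condition nonetheless makes the $\bar r_i$-derivative zero whatever the weight, so it still goes through. The remaining work is routine interchange of derivative, expectation and infinite sum, where I rely on dominated convergence under the unspecified integrability assumptions of \cref{rem:conditions}.
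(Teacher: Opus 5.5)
Your proof is correct, but it is organised differently from the paper's. The paper proves by induction on $N$ an identity that peels one step at a time off the recursion \eqref{eq:bellman}. At each step the residual $\gamma^N\nabla_\theta\EV_{\pi,N-1}[X_{\pi,N}(s_N,a_N)]$ is differentiated in two ways:
\begin{itemize}
\item through the conditional law of $(s_N,a_N)$, which yields the new likelihood-ratio term $X_{\pi,N}\nabla_\theta\log\pi_{\theta,N}$;
\item functionally, through $\theta\mapsto\bar r^\ell_{\pi,N+1}$, which yields $-\gamma^N(\nabla_\theta\bar r^\ell_{\pi,N+1})I_{\eps>N}\ell'(r_{N+1}-\bar r^\ell_{\pi,N+1})$.
\end{itemize}
These target terms accumulate into a sum with zero expectation by \eqref{eq:stationarity}. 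The residual disappears as $N\to\infty$ because $\eps$ is almost surely finite.

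You remove all target terms at once with the envelope/chain-rule step. This is the same cancellation: the paper's accumulated sum is, in expectation, exactly $\sum_i(\nabla_\theta\bar r_i)\,\partial_{\bar r_i}F$. You then apply the classical score-function policy gradient theorem to the frozen reward $\ell(r_i-\bar r_i)$.

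\textbf{What your route buys.} It is modular, and it shows why the result is literally the classical theorem with $X_{\pi,i}$ in place of the action-value function. Your Danskin-type variant can also avoid assuming differentiability of $\theta\mapsto\bar r^\ell_{\pi,i}$, which the paper tacitly needs since $\nabla_\theta\bar r^\ell_{\pi,i}$ appears in its inductive claim. It is not assumption-free, though. The lower bound via $F(\theta+h,\bar r_{\theta+h})-F(\theta,\bar r_{\theta+h})$ still needs $\bar r_{\theta+h}\to\bar r_\theta$ and joint continuity of $\partial_\theta F$.

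\textbf{What the paper's route buys.} It is self-contained and follows the Bellman structure that IVO uses to estimate $X_{\pi,i}$. It also replaces your single exchange of derivative and infinite trajectory sum by one-step exchanges plus a limit on an explicit residual.

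One expository slip: the formula you display right after differentiating the path density already has the causal restriction $i<j$. The raw score-function derivative carries the full sum of scores over the trajectory, and causality is the step that trims it.
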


\begin{proof}
	We will prove by induction on $N$ the following equality:\footnote{We will follow closely the proof of the analogous claim in \citet[Theorem 3.2]{daluiso2023Bcva}, which only considers a quadratic penalty, and even for $\ell(x)=\beta x^2$ does not directly apply to our setup, as it is concerned with the time-homogeneous version of the objective function. Hence we provide all details, also for the reader's convenience and to avoid reliance on the unreviewed extended version of an applied paper.}
	\begin{multline}\label{eq:claim}
		\nabla_\theta \varsigma_\pi^2 = \EVp \left[ \sum_{i=0}^{N-1} \gamma^{i} X_{\pi,i}(s_i,a_i)\nabla_\theta \log\pi_{\theta,i}(a_i|s_i) \right.\\
			\left. - \sum_{i=1}^{N} \left(\nabla_\theta \bar{r}_{\pi,i}^\ell\right) I_{\eps \geq i} \gamma^{i-1}\ell'(r_i - \bar{r}_{\pi,i}^\ell) + \gamma^N \nabla_\theta {\EV}_{\pi,N-1} \left[ X_{\pi,N}(s_N,a_N) \right] \right].
	\end{multline}
	Such relation is enough to get the conclusion by letting $N\to\infty$, because:
	\begin{enumerate}
		\item The first summation converges to \eqref{eq:gradient}, as $X_{\pi,i}$ is zero on $\eps\leq i$.
		\item The second summation has null expected value, as $\EVp[\ell'(r_i-\bar{r}_{\pi,i}^\ell)I_{\eps \geq i}] = 0$ by conditioning on $\eps \geq i$ followed by application of \cref{hp:diff};
		\item The third addend is eventually null because it a gradient of an expectation whose integrand is identically null independently of $\theta$ on $\eps < N$, and $\eps$ is a.s.~finite.
	\end{enumerate}

	For the base $N=0$, the summations disappear and the argument of the outer expected value is the deterministic $\nabla_\theta {\EV}_{\pi} [X_{\pi,0}(s_0,a_0)]$, where the expected value equals $\varsigma_\pi^2$ by the definitions and the tower rule.

	For the induction step, we analyse the third and residual term of the claim, i.e.
	\begin{equation}\label{eq:2}
		\nabla_\theta {\EV}_{\pi,N-1} \left[ X_{\pi,N}(s_N,a_N) \right].
	\end{equation}
	We interpret the conditional expected value as a plain expectation over the conditional law of $(s_N, a_N)$. From such a point of view, we see that it depends on $\theta$ both distributionally because the conditional law depends on $\theta$, and functionally because for each $(s_N,a_N)$ fixed the integrand is a function of $\theta$ via $\pi$. Then by elementary probability theory:
	\begin{enumerate}
		\item The gradient of the distributional dependence is the expected value ${\EV}_{\pi,N-1}$ of the integrand times the conditional likelihood ratio weight:
		      \[
			      X_{\pi,N}(s_N,a_N) \nabla_\theta \log \pi_{\theta,N}(a_N|s_N).
		      \]
		\item The gradient of the functional dependence is the expected value ${\EV}_{\pi,N-1}$ of the gradient of the integrand $X_{\pi,N}$. We compute it on its Bellman expansion \eqref{eq:bellman}: the current-step term gives a gradient
		      \[
			      I_{\eps > N}{\EV}_{\pi,N}[-(\nabla_\theta\bar{r}_{\pi,N+1}^\ell)\ell'(r_{N+1}-\bar{r}_{\pi,N+1}^\ell)],
		      \]
		      while the recursive term gives
		      \[
			      \gamma \nabla_\theta {\EV}_{\pi,N}[X_{\pi,N+1}(s_{N+1},a_{N+1})].
		      \]
	\end{enumerate}

	Now note that in the induction hypothesis, \eqref{eq:2} appears within an unconditional expectation $\EV_\pi$, so when substituting points 1.~and 2.~above into \eqref{eq:claim}, we can remove in both the conditional expectations ${\EV}_{\pi,N-1}$ by the tower rule. After doing this, 1.~gives exactly the new addend in the first summation of the claim; the current-step term in 2.~gives the new addend in the second summation of the claim after the removal of ${\EV}_{\pi,N}$ by a further application of the tower rule; and the recursive term in 2.~gives the new residual exactly as it appears in the claim.
\end{proof}

\begin{remark}[Technical conditions]\label{rem:conditions}
	As in \citet[][Remark 15]{daluiso2023Bcva}, the proof relies on exchanges between expectation and differentiation / limit whose validity can be justified by appeal to standard theorems in concrete settings.
\end{remark}

\begin{remark}[Algorithmic implications]
	The main difference to the original policy gradient theorem used by TRVO is the need to estimate the target levels $\bar{r}^\ell_{\pi,j}$. For the inhomogeneous mean-volatility objective, we know that they equal expectations of the reward on scenarios where the episode is still running, and are readily estimated as sample means. For general $\ell$, they are defined implicitly as minimisers of conditional expected values, so we can estimate them by numerical minimisation of the empirical counterparts of such expected values.
\end{remark}

\section{Examples}\label{sec:examples}

\subsection{Toy example}\label{sec:toy}

In this subsection we define an \emph{ad hoc} environment highlighting the conceptual novelty of the optimality criterion proposed in this paper with respect to the main risk averse formulations in the RL literature.

Specifically, we consider a fixed-horizon problem with $\eps=2$ steps and no discounting ($\gamma=1$), where both the state space and the action space are given by the set $\mathbb{R}$ of real numbers. At $i=0$, the state equals $s_0=0$ deterministically. Then action $a_0$ determines mean and the variance of a Gaussian random variable according to which the next state is sampled; more precisely, $s_1 \sim \mathcal{N}(a_0,a_0^2)$. Finally, the state transitions to the deterministic $s_2 = 1$ irrespective of the action $a_1$, and the episode ends. The rewards are computed as state differences, i.e.~$r_1 = s_1 - s_0 = s_1$ and $r_2 = s_2 - s_1 = 1 - s_1$.

The first remark is that the return is deterministic and independent of the policy:
\[
	\mathcal{G} = s_1 + (1 - s_1) = 1.
\]
As a consequence, any policy is equally good not only according to the classical RL goal, but also for the mean-variance criterion and for all risk averse goals defined by a risk measure applied to $\mathcal{G}$.

Time-consistent dynamic risk measures as proposed by \citet{coache2023reinforcement} do not express any preference between policies either, because the objective function in such case is
\[
	\rho_{0,2} = \rho_0(r_1 + \rho_1(r_2))
\]
where $\rho_0$ and $\rho_1$ are risk measures: indeed, for our environment one can write
\[
	\rho_{0,2} = \rho_0(s_1 + \rho_1(1-s_1)) = \rho_0(s_1 + (1-s_1)) = \rho_0(1) = 1,
\]
where we used that $\rho_i(Z_i)$ is equal to $Z_i$ for any random variable known at step $i$.

On the other hand, as the expected return does not depend on the policy,  homogeneous mean-volatility would minimise the unnormalised reward volatility for any positive risk aversion coefficient, i.e.
\begin{align*}
	\hat{\nu}^2_\pi & = \EVp \left[ \left(s_1 - J_\pi\right)^2 + \left(1 - s_1 - J_\pi\right)^2 \right] = \EVp \left[ \left(s_1 - 1/2\right)^2 + \left(1 - s_1 - 1/2\right)^2 \right]       \\
	                & = 2 \EVp \left[ \left(s_1 - 1/2\right)^2 \right] = 2 \EVp \left[ \left(s_1 - a_0\right)^2 \right] + 2 \left(a_0 - 1/2\right)^2 = 2 a_0^2 + 2\left(a_0 - 1/2\right)^2,
\end{align*}
whose minimiser is $a_0=1/4$. This represents a trade-off between the value $a_0=0$ zeroing the variance of the only source of uncertainty in the environment which is the Gaussian draw $s_1 \sim \mathcal{N}(a_0,a_0^2)$, and the value $a_0=1/2$ splitting the total return $\mathcal{G}=1$ homogeneously into two rewards with equal expectation.

Finally, our inhomogeneous mean-volatility objective also reduces to minimisation of the risk criterion, as the expected return is 1 regardless of $\pi$. However, in this case we have no homogeneity preference, so we are free to concentrate on probabilistic risk:
\[
	\varsigma^2_\pi = \Var_\pi(s_1) + \Var_\pi(1-s_1) = 2a_0^2
\]
is obviously minimised by $a_0 = 0$.

\subsection{Deterministic-horizon example: optimal execution}\label{sec:execution}

In this subsection, we consider the common problem of an operator which needs to execute a large order on an illiquid asset, i.e., one for which the price that can be agreed for a trade worsens significantly depending on the traded size. For such assets, it is at very least too costly, if not impossible, to complete a sizeable trade without slicing it into smaller ones to be digested by the market over a long period of time. As such slicing incurs the risk of market moves during the non negligible execution time, in this situation the trader must strike a balance between acting faster with a higher negative impact on prices, and accepting higher uncertainty in the total cash-flow.

Specifically, we denote by $N_t$ the number of units of asset in the portfolio, and by $C_t$ the cash account of the trader, who starts with an exogenous initial position $N_0$ and must ensure that $N_T=0$ at a certain horizon $T$. To the purpose, at each of a discrete set of times $t_i > 0$, they can add $n_{t_i}$ (signed) units of asset registering a cash-flow of $-n_{t_i} P_{t_i}(n_{t_i})$ units of currency, where the execution price $P$ depends on the traded size $n$. Therefore, by a generic time $t$ the portfolio will contain
\[
	N_t = N_0 + \sum_{0 \leq t_i < t} n_{t_i}
\]
units of the asset, while the cash account will be
\[
	C_t = C_0 - \sum_{0 \leq t_i < t} n_{t_i} P_{t_i}(n_{t_i}).
\]
As typical in practice, we suppose that the trader computes a mark-to-market of his position using a reference price $X_t$ (often this is the mid price):
\[
	V_{t} = N_{t_i} X_{t} + C_t,
\]
and cares about the profit and loss on each period $(t_{i-1},t_i]$, so that the right reward to describe his objective in reinforcement learning terms is
\[
	R_{t_{i}} = V_{t_{i}} - V_{t_{i-1}}.
\]

To be concrete and to have a benchmark, we use the simplest and most famous model for price evolution in an optimal execution context, namely that proposed in the seminal paper \cite{almgren2000optimal}, although our reinforcement learning approach is obviously applicable to any dynamics. In particular, we suppose that the mid price $X_t$ is additively decomposed as the sum of an unaffected price $S_t$ and a correction $I_t$ depending on past trade sizes $n_{t_i}$:
\[
	X_t = S_t + I_t,
\]
where $S_t = S_0 + \sigma W_t$ is an arithmetic Brownian motion, while $I_t = \sum_{t_i \leq t} g \cdot n_{t_i}$ is a linear permanent impact term. We also follow the classical assumption that actual execution prices are worse than the reference price computed before the permanent impact by an amount which grows linearly with trade size:
\[
	P_{t_i} = X_{t_i} + \mathrm{sign}(n_{t_i})(\epsilon + \eta |n_{t_i}|).
\]

Numerically, we take the very same parameters as in section 3.4 of the original paper, i.e.~we put: $S_0 = 50\ \$$, $\sigma = 0.95\ \$\ \mathrm{day}^{-1/2}$, $g=\num{2.5e-7}\ \$$, $\epsilon = 1/16\ \$$, $\eta = \num{2.5e-6}\ \$$, with its toy low frequency $\Delta t=1\ \mathrm{day}$ and long horizon $T=5\ \mathrm{day}$ to liquidate $N_0 = \num{e6}$ shares. As for risk aversion coefficients, they will be also varied in a similar range as theirs, even though they do not have the exact same meaning because of the different risk measure; this is because we expect similar coefficients between the two approaches to lead to similar behaviour, since our reward volatility is a discrete-time analog to quadratic variation of $V_t$, and in the continuous limit the \citet{almgren2000optimal} strategy solves the mean-quadratic variation problem, as proved in \citet{forsyth2012optimal}.

Finally, we implement both the TRVO algorithm and the new IVO algorithm to optimise respectively the homogeneous and inhomogeneous mean-volatility, with a state consisting of the normalised time $t/T$, the current stock price $S_t$, and the current allocation $N_t$, although we conjecture a weak dependence of the optimal policy on the random $S_t$, as the optimiser of the classical Almgren-Chriss model is in fact a deterministic function of time only. Indeed, the optimised amounts turn out to be almost the same on all test paths. For this reason, we can simply analyse the residual inventory as a function of time on a representative test path for both algorithms.

The results of TRVO in \cref{fig:trvo} show a clear misbehaviour of homogeneous mean-volatility for this problem: immediate liquidation is never seen as the best policy, even for very high risk aversions. This is because it concentrates PnL in the first step, increasing heterogeneity of rewards and hence path volatility. This signals that this older objective function is not an adequate description of the trader's goals in this context.

This is fixed by homogeneous mean-volatility, as can be seen in \cref{fig:ivo}. Now we get a more diverse family of execution policies, where the time profile of residual inventory is more and more convex as risk aversion increases: in particular, it ranges from a linear shape for the risk-neutral trader, who splits the trade evenly across timesteps, to full execution in one step for the extremely risk averse trader. This is exactly the qualitative behaviour which makes the solutions in \cite{almgren2000optimal} financially sound.

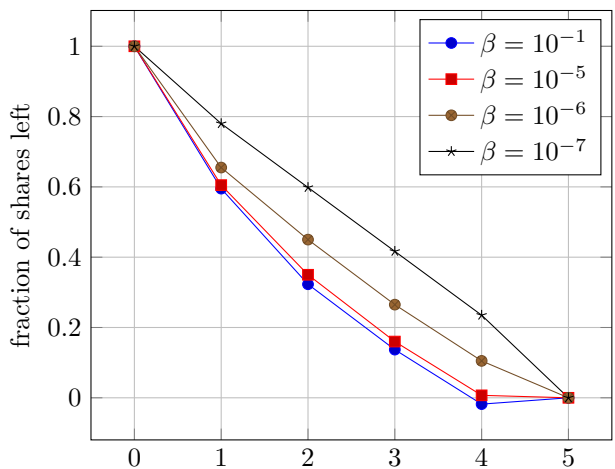
\begin{figure}
	\centering
	\begin{tikzpicture}
		\begin{axis}[
				ylabel=fraction of shares left,
				ytick scale label code/.code={},
				grid=major]
			\addplot table [col sep=comma, y=1.00E-01, x=step]{trvo.csv};
			\addlegendentry{$\beta=10^{-1}$}
			\addplot table [col sep=comma, y=1.00E-05, x=step]{trvo.csv};
			\addlegendentry{$\beta=10^{-5}$}
			\addplot table [col sep=comma, y=1.00E-06, x=step]{trvo.csv};
			\addlegendentry{$\beta=10^{-6}$}
			\addplot table [col sep=comma, y=1.00E-07, x=step]{trvo.csv};
			\addlegendentry{$\beta=10^{-7}$}
		\end{axis}
	\end{tikzpicture}
	\caption{Optimal execution paths for different risk aversion coefficients in the (homogeneous) mean-volatility objective, obtained by the TRVO algorithm.}\label{fig:trvo}
\end{figure}

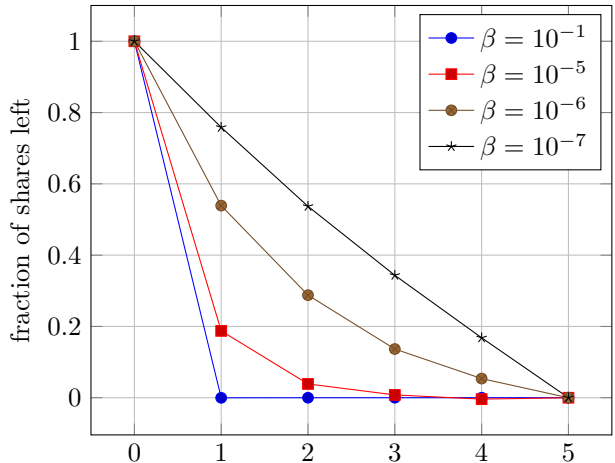
\begin{figure}
	\centering
	\begin{tikzpicture}
		\begin{axis}[
				ylabel=fraction of shares left,
				ytick scale label code/.code={},
				grid=major]
			\addplot table [col sep=comma, y=1.00E-01, x=step]{ivo.csv};
			\addlegendentry{$\beta=10^{-1}$}
			\addplot table [col sep=comma, y=1.00E-05, x=step]{ivo.csv};
			\addlegendentry{$\beta=10^{-5}$}
			\addplot table [col sep=comma, y=1.00E-06, x=step]{ivo.csv};
			\addlegendentry{$\beta=10^{-6}$}
			\addplot table [col sep=comma, y=1.00E-07, x=step]{ivo.csv};
			\addlegendentry{$\beta=10^{-7}$}
		\end{axis}
	\end{tikzpicture}
	\caption{Optimal execution paths for different risk aversion coefficients in the inhomogeneous mean-volatility objective, obtained by the IVO algorithm.}\label{fig:ivo}
\end{figure}

\subsection{Stochastic-horizon example: grid world}\label{sec:gridworld}

In this subsection, we test our objective function on an environment used in non financial reinforcement learning literature to test risk averse algorithms. This will show some non intuitive behaviour of our optimal policies when the number of timesteps per episode is stochastic.

The set of states consists of a finite rectangular grid. At each of a finite number of timesteps, the agent chooses a direction: north, south, east, west, or a diagonal direction, for a total of eight possible actions. Noise is introduced by stating that the agent moves in the selected direction only with some probability lower than 100\%, while with some residual probability a random action is executed. In both cases, if the selected or drawn action would take the agent out of the grid, the state is unaffected for the time step under consideration. The agent collects a constant small negative reward at each timestep, unless it lands on special terminal states where the episode ends; all of them carry a large negative reward except for one which can be interpreted as a goal state, where it is large and positive. We take the problem instance from \cite{moldovan2012risk}, where the grid is that of \cref{fig:grid} with -1 reward on non terminal steps, +35 reward on the terminal green cell, and -35 reward on the terminal red cells; the maximum episode length is 35 steps, the probability that a random action is picked instead of that chosen by the agent is 8\%, and the discount factor $\gamma$ equals 1.

\begin{figure}
	\centering
	\includegraphics[scale=0.5]{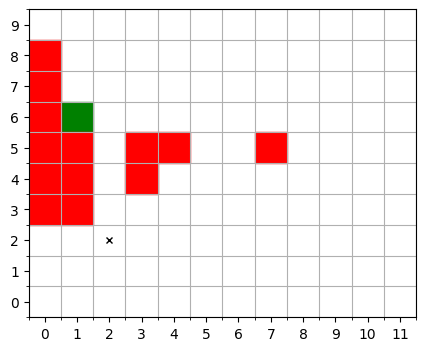}
	\caption{Grid-world instance used in \cref{sec:gridworld}, taken from \cite{moldovan2012risk}. Each episode starts from the cell marked by x and ends when the green cell is reached with reward +35, or a red cell is reached with reward -35. On every other timestep the reward is -1, for a maximum of 35 time steps.}\label{fig:grid}
\end{figure}

The intuition would be that a risk-neutral agent should take the shortest path to reach the green cell, to pay the -1 cost for as few steps as possible; while a risk averse agent should pay for longer routes to keep far from the red cells, thus reducing the probability of getting their low reward due to noise. Let us see to what extent this is confirmed under some specification of our objective function.

In particular, in the following subsections we optimise mean-$\ell$-volatility with different choices of $\ell$, each time for a grid of values of the risk aversion coefficient; unless otherwise stated, this is done by the nested approach in \eqref{eq:nested}, with the inner classical reinforcement learning problem solved exactly by tabular value iteration. Then to get an intuition on the optimised policies, we present graphically their most likely episode, i.e.~the path which is followed by the agent when noise does not disturb the action at any timestep.

\subsubsection{Inhomogeneous mean-volatility}\label{sec:meanvol}

We begin with the simplest version of our objective function, defined in \cref{def:inhomogeneous_meanvol}. The solutions we obtain for several risk aversion coefficients are in \cref{fig:meanvol}.

\begin{figure}
	\centering
	\includegraphics[width=0.9\textwidth]{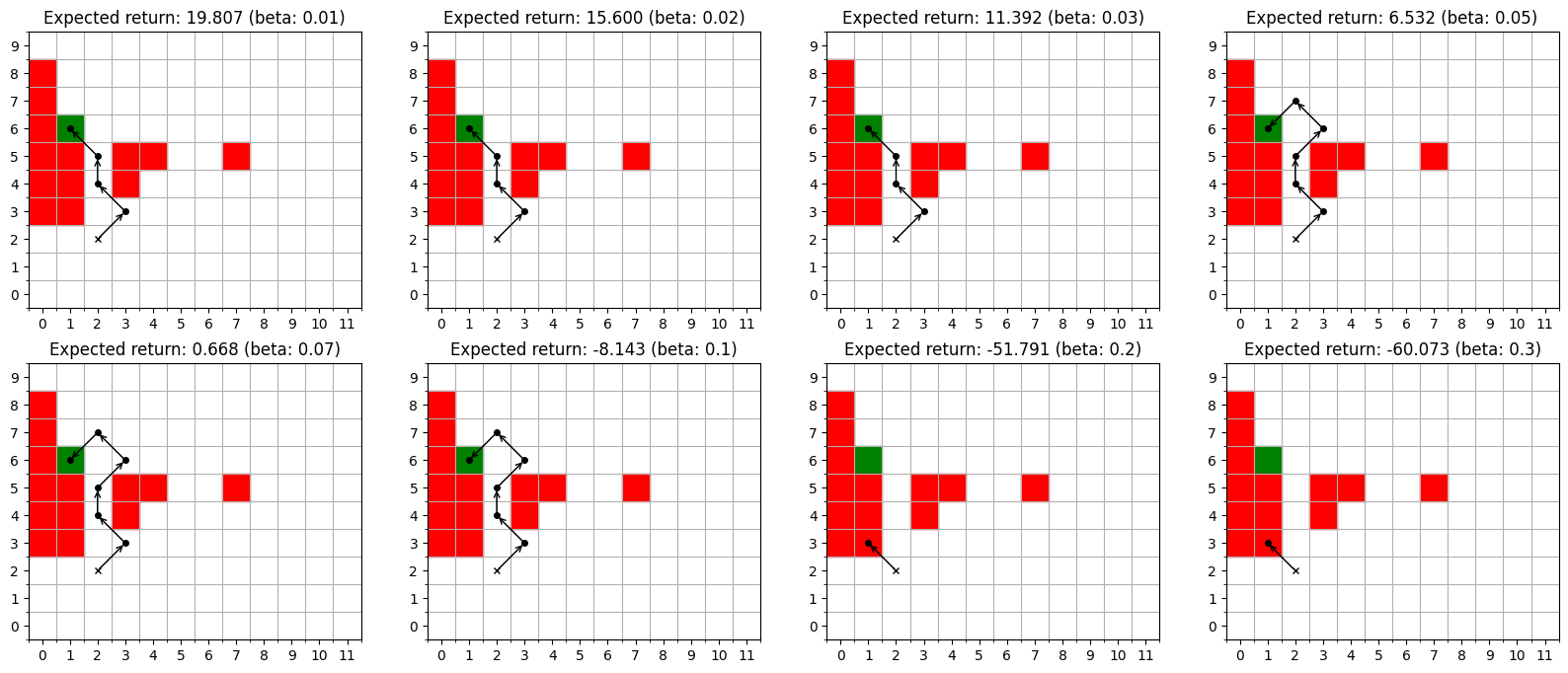}
	\caption{Path generated in a noiseless test environment by optimal policies for inhomogeneous mean-volatility \eqref{eq:inhomogeneous_meanvar}, for a selection of values of the risk aversion $\beta$.}\label{fig:meanvol}
\end{figure}

Their shape for low risk aversions is as expected: when the agent is almost risk neutral, it is optimal to take the shortest path possible. Note that the passage through cell (3,3) is strictly better than a geometrically shorter alternative path through cell (2,3), as the two paths have in fact the same length in terms of number of steps, and the former has lower probability to end in the obstacles by chance.

The shape for high risk aversion, while counter-intuitive at first sight, is also reasonable, and perhaps even more coherent than the path typically produced by other approaches, who still reach the goal cell although with a long detour. Indeed, sufficiently risk averse agents should privilege keeping as far as possible from the red squares over reaching the green one, to the point that they should aim at wandering close to the edges of the grid for the full 35 steps; but then they would get a total reward of -35 very often, and an even worse one in some unlucky episodes where noise interferes. So they are better off getting the full -35 straight away and end the episode by trying to dive into the obstacle immediately, as our agents do.

A less satisfactory behaviour is obtained for intermediate risk aversions. Indeed, the agents still take the quite risky decision of going though the narrow corridor, and the only difference from the risk neutral path is that at the fourth step of the no-noise episode, instead of choosing the action that would immediately lead to the goal state, they make a couple of waiting moves to get the +35 reward slightly later. One of the reasons for this is probably that it is suboptimal to assign a high target $\bar{r}_{\pi,4}$ to the fourth reward, as there is a non negligible number of episodes which will get delayed by noise in at least one of the first four steps, and which would contribute a strong risk penalisation. But once $\bar{r}_{\pi,4}$ is set to a low value, due to volatility equally penalizing positive and negative deviations from it, the agents will be incentivised to \emph{not} get the +35 at the fourth step even when they could as in the depicted no-noise scenario: once close enough to the goal, they will rather pay a small -1 penalty for a couple more times, and then get the +35 at the sixth step, where it is safe enough to set a high target $\bar{r}_{\pi,6}$, because the green cell can be reached with high probability in six moves even on scenarios where noise interferes.

\subsubsection{Inhomogeneous monotone mean-volatility}\label{sec:monotone}

Although the above reasoning might give a satisfactory intuitive motivation for the diamond-shaped endings in \cref{fig:meanvol}, from a human perspective they do not look very rational, so we try to change the objective function. In particular, since we partially blamed the well known flaw according to which mean-variance sometimes prefers to avoid an uncertain significant increase in reward, we try the minimal fix to this incoherence, which was defined in \cref{ex:monotone_meanvar}. The distinct types of path we get on no-noise scenarios with optimal policies corresponding to varied risk aversions are in \cref{fig:monotone}.

\begin{figure}
	\centering
	\includegraphics[width=0.9\textwidth]{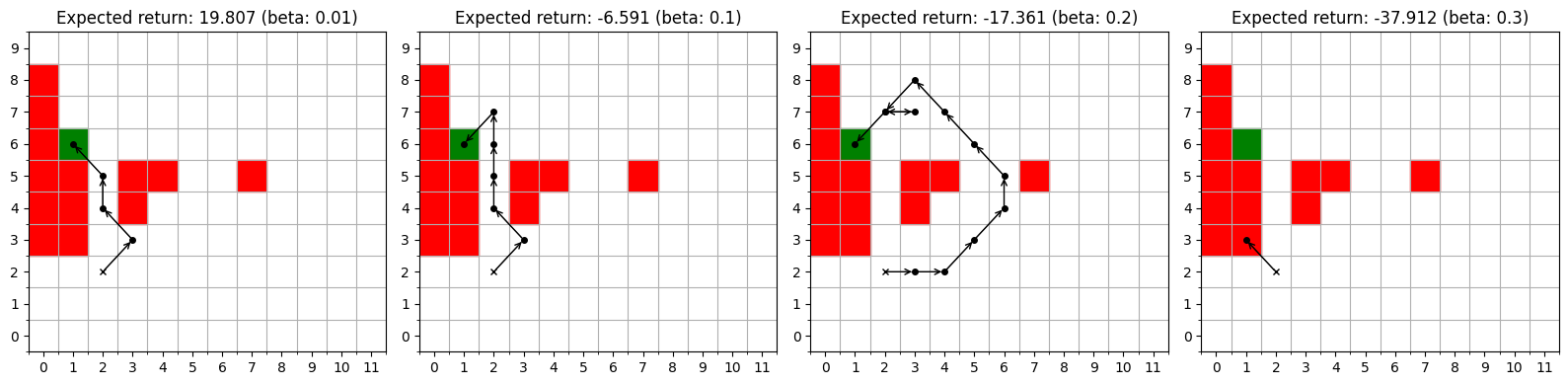}
	\caption{Path generated in a noiseless test environment by optimal policies for inhomogeneous monotone mean-volatility (\cref{ex:monotone_meanvar}), for a selection of values of the risk aversion $\beta$.}\label{fig:monotone}
\end{figure}

We can see that the good limiting cases for large and small risk aversions are retained. We also get a slightly more diverse set of optimal paths, in that for some large but not extreme risk aversions, a longer and safer path still reaching the goal state appears. But unfortunately, the main objections raised in \cref{sec:meanvol} carry over: indeed, for moderately small risk aversions, the path makes two unnecessary vertical steps before terminating, while for moderately high risk aversions, the path takes a horizontal round-trip before terminating. Therefore, removing disincentives to large rewards from earlier steps is not enough. Maybe in view of \eqref{eq:monotone}, whose interpretation is that monotone mean-variance is the mean-variance one could get having the opportunity to disregard part of the outcome, it is still better for the agent to attain the large target later but in more scenarios, rather than earlier when he must ignore part of it.

\subsubsection{Inhomogeneous optimised certainty equivalent}

A final attempt to incentivise higher returns at all steps would be to take a strictly monotone utility in the construction in \cref{ex:oce}. Results for exponential utility, i.e.
\begin{equation}\label{eq:exponential}
	{\ell}_\beta(x) = x - \frac{1 - \exp(-\beta x)}{\beta},
\end{equation}
are presented in \cref{fig:exponential}.

\begin{figure}
	\centering
	\includegraphics[width=0.9\textwidth]{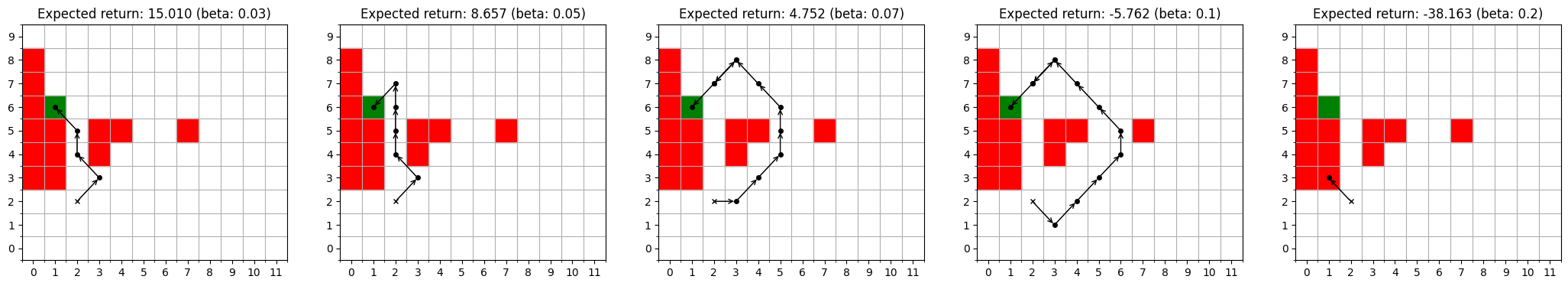}
	\caption{Path generated in a noiseless test environment by optimal policies for inhomogeneous mean-$\ell$-volatility with $\ell$ defined by \eqref{eq:exponential}, for a selection of values of $\beta$.}\label{fig:exponential}
\end{figure}

The set of paths is even richer than in \cref{sec:monotone}, and the round-trips disappear; but sadly enough, the second graph from the left still features the unnecessary upward steps we noticed in \cref{fig:monotone}.

\subsubsection{Dynamic monotone hull and IVO}

In all the tests of \cref{sec:gridworld}, the only debatable behaviours seem caused by some form of non monotonicity: the agent prefers a path which is identical except for two additional negatively-rewarded steps. For risk-measures $\rho$ in a non dynamic context, we already mentioned that a known fix is the monotone hull, which allows \emph{throwing away} part of the \emph{return} $\mathcal{G}$:
\[
	\sup_{\mathcal{Y} \in L_0^{+}} \rho(\mathcal{G}-\mathcal{Y}).
\]

This might suggest a generalisation for our setting in which the risk $\vc{\rho}$ depends on the \emph{sequence} of the rewards $r_i$: perhaps instead of simply giving up the reward, in a sequential decision setting we should
allow \emph{reserving} part of it for later steps, by optimizing
\[
	\sup_{Y_0=0, Y_{i} \in L_0^+(\mathcal{F}_i) \ \forall i>0} \vc{\rho}\left((r_i - Y_{i} + Y_{i-1})_{i\geq 1}\right).
\]

In principle, this could be implemented by just adding the real-valued reserve $Y_i$ to the control variables at time step $i$. However, with such addition, the exact solution of the inner problem becomes numerically infeasible, and one must give up the nested optimisation approach based on \eqref{eq:nested} in favour of IVO. But then, as a reinforcement learning algorithm based on function approximation (in our implementation, by feed-forward neural networks), IVO has difficulties in finding the odd slight improvements around the target state which we disliked in the above subsections. Therefore, the only tool we currently have to solve the problem with a reserve, fails to reproduce the glitches we know of even when applied without it. Hence, as of now we cannot numerically verify the conjecture that they would be fixed.

On the other hand, one can also argue more pragmatically that for this very reason, gradient based solvers may be practically effective in automatically giving reasonable solutions that smooth the edges of pathological minima even without complicating the space of actions.

\section{Conclusion}\label{sec:conclusion}

In this work, we have tried to define a risk averse objective function such that:
\begin{enumerate}
	\item Modern reinforcement learning algorithms can be leveraged without excessive overhead;
	\item The time distribution of rewards matters, unlike most proposals which define risk on the total return;
	\item Homogeneous time splits are not favoured, unlike mean-volatility which was satisfactory for hedging.
\end{enumerate}
We have proven that a time-inhomogeneous variant of mean-volatility has all the above properties, and that it can also be generalised using as building block loss functions which are better behaved than mean-variance, including monotone mean-variance and optimised certainty equivalents.

Empirically, we have seen that the proposed approach works well in the financially relevant problem of optimal execution, and we believe that the same would hold for any problem where reward timing is well predictable given the policy. On the other hand, we have observed slightly counter-intuitive optimal policies in a non financial test problem where there may be unexpected or anticipated large windfalls. While this should be quite uncommon in finance, modifications tackling these effects could be the subject of future research.

\section*{Disclaimer}

The authors report no potential competing interests. The opinions expressed in this document are solely those of the authors and do not represent in any way those of their present and past employers.

\bibliographystyle{apalike}
\bibliography{Bibl}

\end{document}